\title{
A Nash Equilibrium Need Not Exist in the Locational Marginal Pricing Mechanism
}
\author{
\IEEEauthorblockN{Wenyuan Tang}
\IEEEauthorblockA{
Department of Electrical Engineering\\
University of Southern California\\
{\tt wenyuan@usc.edu}}
\and
\IEEEauthorblockN{Rahul Jain}
\IEEEauthorblockA{
EE \& ISE Departments\\
University of Southern California\\
{\tt rahul.jain@usc.edu}}
}
\begin{document}

\maketitle
\thispagestyle{empty}
\pagestyle{empty}

\begin{abstract}
Locational marginal pricing (LMP) is a widely employed method for pricing electricity in the wholesale electricity market. Although it is well known that the LMP mechanism is vulnerable to market manipulation, there is little literature providing a systematic analysis of this phenomenon. In the first part of this paper, we investigate the economic dispatch outcomes of the LMP mechanism with strategic agents. We show via counterexamples, that contrary to popular belief,  a Nash equilibrium may not exist. And when it exists, the price of anarchy may be arbitrarily large. We then provide two sufficient conditions under either of which an efficient Nash equilibria exists. Last, we propose a new market mechanism for electricity markets, the Power Network Second Price (PNSP) mechanism that always induces an efficient Nash equilibrium. We briefly address the extensions on the demand side.
\end{abstract}

\begin{keywords}
Electricity market, game theory, locational marginal pricing, mechanism design.
\end{keywords}

\section{Introduction}

Electric power is traded in wholesale electricity markets that involve various entities: the generators who generate and sell power, the distributors who buy power and sell to consumers, and the Independent System Operator (ISO). The ISO's role is to act as a dispatcher and a market-maker. The ISO asks for economic signals from the generators and the distributors and then determines a generation and dispatch schedule, as well as the nodal prices. The surplus is distributed among the Transmission System Operators (TSOs) to compensate them for use of the transmission infrastructure. This problem is commonly called the \emph{economic dispatch problem}. 

Indeed, the participants are economic agents each with their own objectives and some private information as well. Hence, they would be expected to behave in a way to further their own interests, or in other words, act strategically, and even give misleading information if it benefits them. However, in the design and analysis of mechanisms for economic dispatch, it is assumed that both generators and distributors truthfully reveal their marginal costs, marginal utilities, etc. Thus, an optimization problem is  formulated whose solution then gives the optimal dispatch schedule. 

In this paper, we start with the premise that the market participants, i.e., the generators and the distributors are self-interested and strategic economic agents, and despite the regulation they are subjected to, they can and will find ways to manipulate the market. This is not far-fetched since Enron energy traders indeed found ways to manipulate the congestion prices that led to the California electricity crisis of 2000--01 \cite{Mc02,Sw02}. Our goal is to understand what happens to market efficiency when there is strategic behavior, and the resulting outcomes. 


Locational marginal pricing (LMP), also known as nodal pricing, is a method for pricing electricity in the wholesale market \cite{KiSt04,ShYaLi02,Mo09,St02,Le03}. The locational marginal price (LMP) at a node is the cost of supplying the next increment of load at that node, taking into account transmission losses and congestion. Typically, it is the shadow price for the power balance constraint in the economic dispatch optimization problem. The mechanism is widely used in the US, Canada, Australia and New Zealand.

The LMP mechanism has been studied extensively but primarily in a competitive market setting where the market participants are assumed to be truthful about their cost and demand functions, and no one exercises market power \cite{Ne95}. In practice though,  exercise of market power is observed often, and there is no reason to believe that the revealed generation marginal costs are exactly truthful. To the best of our knowledge, ours seems to be the first work to study the LMP mechanism in a game-theoretic framework with strategic agents. 

Our work differs from the literature on supply function equilibria \cite{KlMe89,Gr99,BaGrKa04,JoTs11} in two ways. First, we take the underlying topology of the power network into account including any capacity limits on the transmission lines (which is precisely the reason LMPs are defined). Moreover, power flows over the network in accordance with the Kirchoff's circuit laws, which makes the game-theoretic anaysis of the network a lot more complicated than for communication or transportation networks. Secondly, we adopt a general bid format while quadratic forms are usually used in the study of supply function equilibrium.  


This paper is partly inspired by \cite{WuVaSpOr96} which formally established several ``folk theorems'' about the LMP mechanism, and showed that many common assertions (at that time) about it were factually incorrect. That paper also suggests the need to understand the LMP mechanism in a game-theoretic setting.  Among the few papers that study the strategic interaction in electricity markets involving transmission lines, \cite{Ho97} is based on a Cournot competition model, rather than the LMP mechanism. It also focuses on the strategic behavior of one agent only, assuming the environment at the other nodes are competitive. The motivation of \cite{HoMePa00} is close to our work but it focuses on the computational aspects of solving the optimization problem, and does not address the existence of Nash equilibria.

Our main results are the following: (i) Contrary to ``folk'' assertions, we present counterexamples to show that a Nash equilibrium may not even exist in the LMP mechanism; (ii) When a Nash equilibrium exists, the equilibrium outcome may be arbitrarily inefficient; (iii) We provide two sufficient conditions (including one on network topology) under either of which efficient Nash equilibria exist; (iv) We propose a new mechanism, called the Power Network Second Price (PNSP) mechanism that always induces an efficient Nash equilibrium; and (v) We consider extensions on the demand side in which the demand can be elastic and/or strategic.


\section{Economic Dispatch and the LMP mechanism}\label{sec:lmp}

We first introduce the economic dispatch problem. The concept of LMPs is based on the optimality conditions for this problem.

\subsection{The Economic Dispatch Problem}

We assume a connected power network throughout the paper, which consists of $I$ nodes (or buses), indexed by $i = 1,\ldots,I$. A transmission line connecting node $i$ and $j$ is characterized by its electrical admittance, denoted $Y_{ij} > 0$. If there is no such line, $Y_{ij} = 0$. Note that $Y_{ij} = Y_{ji}$.

Let $V_i$ be the magnitude of the voltage at node $i$, and $\th_i$ be the phase angle. The real power flow over the line from node $i$ to $j$ is given by
\begin{equation*}
q_{ij} = V_i V_j Y_{ij} \sin(\th_i-\th_j),
\end{equation*}
which ignores reactive power and line losses. By our sign convention, $q_{ij} = -q_{ji}$ is positive if the power flows from node $i$ to $j$. Also, it is reasonable to assume that $V_i$'s are approximately constant. Without loss of generality, we can set $V_i \equiv 1$. Furthermore, due to AC power flow, the economic dispatch problem is typically a nonlinear program that is difficult to solve in practice. Therefore, a DC flow model is often used as an approximation \cite{StJaAl09}, by assuming that the phase angle differences $|\th_i - \th_j|$ are small. Then we have $\sin(\th_i-\th_j) \approx (\th_i-\th_j)$ and
\begin{equation*}
q_{ij} = Y_{ij} (\th_i-\th_j).
\end{equation*}
Let the capacity limit of line $i$-$j$ be $C_{ij} = C_{ji} > 0$. So we have
\begin{equation*}
q_{ij} = Y_{ij} (\th_i-\th_j) \le C_{ij}.
\end{equation*}

Assume there are $N$ generators, indexed by $n = 1,\ldots,N$. Denote the set of generators at node $i$ by $\sN_i$. The cost of each generator $n$ is $c_n: \bbR_+ \to \bbR_+$ as a function of its generation $x_n$, where $c_n(0) = 0$, $c_n'(x_n) \ge 0$, and $c_n''(x_n) \ge 0$ (see Fig. \ref{fig1}). As for the demand side, we assume inelastic demand $D_i \ge 0$ at each node $i$. So the net power injected into the network at node $i$ is
\begin{equation*}
\sum_{n \in \sN_i} x_n - D_i = \sum_j q_{ij} = \sum_j Y_{ij} (\th_i-\th_j).
\end{equation*}

The economic dispatch problem is to determine the optimal generation schedule that minimizes the total cost subject to the transmission constraints. Formally, it is a convex program with linear constraints:
\begin{align}
& \underset{x_n,\th_i}{\min} & & \sum_n c_n(x_n) \label{eq:edp1}\\
& \text{s.t.} & & \sum_{n \in \sN_i} x_n - D_i = \sum_j Y_{ij} (\th_i-\th_j), \ \forall i, \label{eq:edp2}\\
& & & Y_{ij} (\th_i-\th_j) \le C_{ij}, \ \forall (i,j), \label{eq:edp3}\\
& & & x_n \ge 0, \ \forall n, \label{eq:edp4}
\end{align}
where (\ref{eq:edp2}) is the power balance equation, and (\ref{eq:edp3}) is the line flow constraint. For a more complete model of the economic dispatch problem (e.g., when line losses are taken into account), the reader can refer to \cite{Hs97}.

Note that the economic dispatch problem is based on a given set of states of the generators, in comparison with the unit commitment problem \cite{Mo09}; the $N$ generators in our consideration have been started up. The intercept of the cost function is indeed a sunk cost, which does not affect the optimal solution of the economic dispatch problem. One will also see that it is irrelevant for the optimal bidding strategies of the generators. This explains why we can assume $c_n(0) = 0$ (and later, $b_n(0) = 0$ for the bid curve) without loss of generality.

\subsection{The LMP Mechanism}

We now introduce how to determine the price at each node, specified by the LMP mechanism. Associate the Lagrange multipliers $\p_i$ with (\ref{eq:edp2}) and $\m_{ij} \ge 0$ with (\ref{eq:edp3}). The optimal solution of the economic dispatch problem (\ref{eq:edp1})-(\ref{eq:edp4}) is characterized by the following Karush-Kuhn-Tucker (KKT) conditions:
\begin{align*}
[c_n'(x_n)-\p_i] x_n & = 0, \ \forall n \in \sN_i,\\
c_n'(x_n)-\p_i & \ge 0, \ \forall n \in \sN_i,\\
\sum_j Y_{ij} (\p_i-\p_j+\m_{ij}-\m_{ji}) & = 0, \ \forall i,\\
\sum_{n \in N_i} x_n - D_i - \sum_j Y_{ij} (\th_i-\th_j) & = 0, \ \forall i,\\
\mu_{ij} [Y_{ij} (\th_i-\th_j) - C_{ij}] & = 0, \ \forall (i,j),\\
Y_{ij} (\th_i-\th_j) - C_{ij} & \le 0, \ \forall (i,j),\\
x_n & \ge 0, \ \forall n.
\end{align*}
The LMP at node $i$ is defined as $\p_i$, the interpretation of which is the cost of supplying the next increment of load at that node. Intuitively, the higher the LMP, the more difficult to deliver power to that node. The payoff of generator $n \in \sN_i$ is given by
\begin{equation}\label{eq:payoff}
u_n = \p_i x_n - c_n(x_n).
\end{equation}
The LMP mechanism is considered economically efficient in the following sense: if the price at each node $i$ is fixed as $\p_i$, and each generator $n$ chooses $x_n$ to maximize his own payoff, then the resulting dispatch is the economic dispatch. This can be seen from the KKT conditions, which state that the marginal cost of a generator with positive generation is exactly the LMP at that node.

There are some counterintuitive facts about LMP, due to Kirchhoff's circuit laws (under DC approximation). But this is outside the scope of this paper, and the reader can refer to \cite{WuVaSpOr96} for details.

\begin{figure}
\centering
\begin{tikzpicture}
\draw [->] (0,0) node[below left] {$O$} -- (6,0) node[right] {$x_n$};
\draw [->] (0,0) -- (0,4.5);
\draw (0,0) parabola (5,2) node[right] {$c_n(x_n)$};
\draw [-] (0,0) -- (1.2,0.3) node[draw,circle,inner sep=1pt,fill] {} -- (2.4,0.9) node[draw,circle,inner sep=1pt,fill] {} -- (3.6,1.8) node[draw,circle,inner sep=1pt,fill] {} -- (4.8,3) node[draw,circle,inner sep=1pt,fill] {} node[right] {$b_n^0(x_n)$};
\draw [-] (0,0) -- (2.8,2) node[draw,circle,inner sep=1pt,fill] {} -- (4.2,3.8) node[right] {$b_n(x_n)$};
\draw (1.4,1.4) node {$p_n$};
\draw (3.4,3.2) node {$q_n$};
\draw [dashed] (2.8,2) -- (2.8,0) node[below] {$s_n$};
\end{tikzpicture}
\caption{The true cost curve $c_n(x_n)$, the bid curve $b_n^0(x_n)$ in practice, and the simplified bid curve $b_n(x_n)$ determined by the three-dimensional bid $(p_n,s_n,q_n)$.}
\label{fig1}
\end{figure}
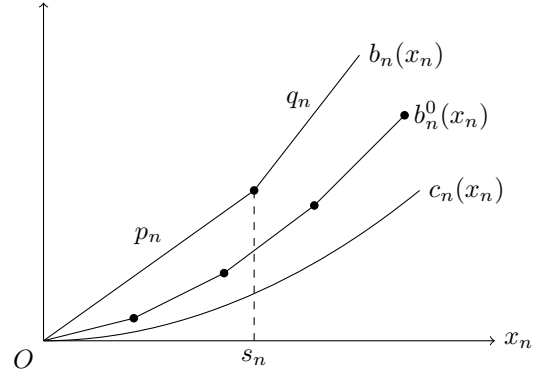

\section{Game-Theoretic Analysis}\label{sec:game}

The LMP mechanism is based on the assumption of a competitive environment, in which generators are considered as price takers. In reality, however, generators may have market power so that they may have an incentive to not reveal their cost functions truthfully. In that case, the economic dispatch problem is skewed and the solution may not be optimal.

The main focus of this paper is to study the equilibrium outcomes under LMP when generators act strategically. The first step is to reformulate the economic dispatch problem as a game.

\subsection{The Economic Dispatch Game}

Recall that the cost function space is infinite-dimensional. In the day-ahead market, each generator is only allowed to report a function from a finite-dimensional space as an approximation of his true cost function. Such a reported cost function is called a bid. We will specify the bid format in the next subsection.

We reformulate the economic dispatch problem as a game, which we call the economic dispatch game (or LMP game). Given a bid profile $b = (b_1,\ldots,b_N)$, the LMP mechanism determines the dispatch $x = (x_1,\ldots,x_N)$ as a solution of the following optimization problem:
\begin{align}
& \underset{x_n,\theta_i}{\text{min}} & & \sum_n b_n(x_n) \label{eq:edg1}\\
& \text{s.t.} & & \sum_{n \in \sN_i} x_n - D_i = \sum_j Y_{ij} (\th_i-\th_j), \ \forall i, \label{eq:edg2}\\
& & & Y_{ij} (\th_i-\th_j) \le C_{ij}, \ \forall (i,j), \label{eq:edg3}\\
& & & x_n \ge 0, \ \forall n. \label{eq:edg4}
\end{align}
Compared with the economic dispatch problem (\ref{eq:edp1})-(\ref{eq:edp4}), the only difference is that the true cost function $c = (c_1,\ldots,c_N)$ is replaced by the bid $b$. The LMP $\p = (\p_1,\ldots,\p_N)$ is defined similarly, depending on $b$ instead of $c$. The payoff is the same as (\ref{eq:payoff}). This completes the specification of the economic dispatch game.

We adopt the (pure) Nash equilibrium as the solution concept. As usual, a \emph{Nash equilibrium} is a bid profile $b$ with the associated outcome $(x,\p)$, in which no generator can be better off by a unilateral deviation. If $x$ also solves the economic dispatch problem (\ref{eq:edp1})-(\ref{eq:edp4}), then $b$ is called an \emph{efficient} Nash equilibrium.

Note that we stress the association between the bid $b$ and the outcome $(x,\p)$, because the primal optimal solution $x$ as well as the dual optimal solution $\p$ may not be unique for a given $b$, especially for flow constrained network and non-differentiable bids. This rarely happens in practice, but does cause technical issues in equilibrium analysis. For our purposes, we break ties for $\p$ by minimizing $\p_i$ in the lexicographic order. That is, from the set of the dual optimal solutions, we first choose those with the minimum $\p_1$, from which we then choose those with the minimum $\p_2$, and so forth.

\subsection{Bid Format}

In practice, the bid $b_n^0$ of generator $n$ is typically a piecewise linear function with increasing slopes, as shown in Fig. \ref{fig1}. According to the CAISO \cite{caiso}, for example, ``There are 10 bid segments and 11 associated bid points. Each bid point has a generation (MW) and price (PR) value, which are paired together as MW and price coordinates.''

In this paper, we adopt a simplified bid format. Specifically, the bid $b_n$ of generator $n$ is determined by a three-dimensional signal $(p_n,s_n,q_n)$, where $0 \le p_n \le q_n$ and $s_n \ge 0$:
\begin{equation*}
b_n(x_n) = \left\{\begin{array}{ll} p_nx_n, & x_n \le s_n,\\ q_nx_n+(p_n-q_n)s_n, & x_n > s_n.\end{array}\right.
\end{equation*}
This is illustrated in Fig. \ref{fig1}. Due to the correspondence, such a signal will also be called a bid. Note that when $p_n = q_n$, $b_n$ becomes a linear function so that the value of $s_n$ does not matter.

The underlying rationale of the simplification is that the (sub)gradient of $b_n^0$ at the resulting $x_n$ contains the key information to determine the outcome by convexity. Given $b^0 = (b_1^0,\ldots,b_N^0)$ with the resulting dispatch $x$, if $b_n^0$ is replaced by $b_n$ (specified by $(p_n,s_n,q_n)$) where $s_n = x_n$ and $p_n$ (or $q_n$) is the left (or right) derivative of $b_n^0$ at $x_n$, then the resulting dispatch remains the same. Clearly, $p_n < q_n$ if $x_n$ is a breakpoint of $b_n^0$ and $p_n = q_n$ otherwise.

Given the explicit bid format, we can restate the economic dispatch game (\ref{eq:edg1})-(\ref{eq:edg4}) as a linear program (where $x_n := x_n^p+x_n^q$):
\begin{align*}
& \underset{x_n^p,x_n^q,\theta_i}{\text{min}} & & \sum_n (p_n x_n^p + q_n x_n^q)\\
& \text{s.t.} & & \sum_{n \in \sN_i} (x_n^p+x_n^q) - D_i = \sum_j Y_{ij}(\th_i-\th_j), \ \forall i,\\
& & & Y_{ij}(\th_i-\th_j) \le C_{ij}, \ \forall (i,j),\\
& & & x_n^p \le s_n, \ \forall n,\\
& & & x_n^p, x_n^q \ge 0, \ \forall n.
\end{align*}

Such a three-dimensional bid is quite versatile. It will also be used in the proposed Power Network Second Price mechanism (see Section \ref{sec:pnsp}). Moreover, the idea of dimensional reduction may be applied to a more general underlying bid $b_n^0$, not necessarily a convex piecewise linear function.

We should also note that quadratic bid curves (as in supply function equilibrium literature) provide smooth dispatch, revenue and profit curves that facilitate calculus-based analysis, while piecewise linear bid curves (adopted by most ISOs) do not produce continuously differentiable dispatch, revenue and profit curves, requiring different analysis techniques \cite{CaAl04}. This issue, though, has little impact on our main results.

\section{Strategic Behavior in the LMP mechanism}\label{sec:cons}

In this section, we demonstrate the undesirable outcomes under the LMP mechanism, due to the selfish and strategic manipulation by generators. We present counterexamples to show that a Nash equilibrium may not exist in the economic dispatch game. Even when a Nash equilibrium exists, the price of anarchy may be arbitrarily large.

\subsection*{Non-Existence of Nash equilibria}

Game-theoretic analysis for multidimensional action space can be demanding in general. In the following examples, we sometimes turn our attention from the bid to the outcome (i.e., the dispatch and the LMPs), which may facilitate the analysis.

\begin{example}\label{exmp:1}
Consider the network as shown in Fig. \ref{fig2}. The network has two nodes, with generator $1$ at node $1$ and generator $2$ at node $2$. The line capacity is $C$. The only demand is $D > C$ at node $2$. The cost functions are $c_n(x_n) = a_nx_n$ for all $n$ with $a_1 < a_2$. So the efficient economic dispatch is $x^* = (C,D-C)$.

\begin{figure}[h]
\centering
\begin{tikzpicture}[>=latex]
\draw [-,ultra thick] (0,0) -- (0,1.6) node[above] {1};
\draw [-,ultra thick] (2.5,0) -- (2.5,1.6) node[above] {2};
\draw [-] (0,0.8) -- (1.25,0.8) node[above] {$C$} -- (2.5,0.8);
\node at (-1,1.2) [circle,draw] {$x_1$} edge (0,1.2);
\node at (3.5,1.2) [circle,draw] {$x_2$} edge (2.5,1.2);
\draw [->] (2.5,0.4) -- (3.3,0.4) node[right] {$D > C$};
\end{tikzpicture}
\caption{A two-node network for which no Nash equilibrium exists.}
\label{fig2}
\end{figure}
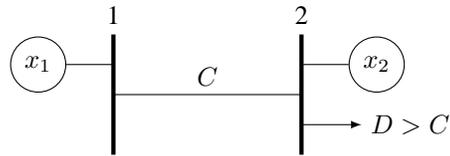

The LMP mechanism is used to do the dispatch and the generators now act strategically. But there is no Nash equilibrium in the corresponding game. This is because it is guaranteed that $x_2 \ge D-C$ and generator $2$ is able to make $\p_2$ arbitrarily large by choosing $p_2$ arbitrarily large, so that his payoff is unbounded. To make it non-trivial, we impose a bid constraint, say $q_1,q_2 \le a$ (which is a sufficiently large constant, interpreted as a reserve price of the demand). We now prove the non-existence by contradiction.

Suppose $b$ is a Nash equilibrium with the associated outcome $(x,\p)$. First, it can be seen that $a_1 \le \p_1 \le \p_2$ and $a_2 \le \p_2$. In fact, we have $a_2 \le \p_1 = \p_2$; otherwise, generator $1$ has an incentive to increase $\p_1$, so as to be better off. Moreover, $x_1 = C$; otherwise, generator $1$ has an incentive to deviate. Given $x_2 = D-C$, there must be $\p_2 = a$ and thus $\p_1 = a$. But this cannot be a Nash equilibrium, since generator $2$ now can increase $x_2$ by making $\p_2$ slightly smaller, so as to be better off. This proves that there does not exist a Nash equilibrium.
\end{example}

The above example might seem concocted since there is only one generator at the demand node, who knows that at least $D-C$ units of generation will have to be purchased from him. But as the next example shows, even if there is competition at the demand node, a Nash equilibrium still may not exist.

\begin{example}\label{exmp:2}
Consider the network as shown in Fig. \ref{fig3}. The network has three nodes, with generator $1$ at node $1$, generator $2$ at node $2$, and generator $3$ and $4$ at node $3$. The capacity of line $1$-$2$ is $C$, and the capacity of line $2$-$3$ is $C' > C$. The only demand is $D > C'$ at node $3$. The cost functions are $c_n(x_n) = a_nx_n$ for all $n$ with $a_1 < a_2 < a_3 < a_4$. So the efficient economic dispatch is $x^* = (C,C'-C,D-C',0)$.

\begin{figure}[h]
\centering
\begin{tikzpicture}[>=latex]
\draw [-,ultra thick] (0,0) -- (0,1.6) node[above] {1};
\draw [-,ultra thick] (2.5,0) -- (2.5,1.6) node[above] {2};
\draw [-,ultra thick] (5,-0.4) -- (5,2) node[above] {3};
\draw [-] (0,0.4) -- (1.25,0.4) node[below] {$C$} -- (3.75,0.4) node[below] {$C' > C$} -- (5,0.4);
\node at (1,1.2) [circle,draw] {$x_1$} edge (0,1.2);
\node at (3.5,1.2) [circle,draw] {$x_2$} edge (2.5,1.2);
\node at (6,1.7) [circle,draw] {$x_3$} edge (5,1.7);
\node at (6,0.8) [circle,draw] {$x_4$} edge (5,0.8);
\draw [->] (5,0) -- (5.8,0) node[right] {$D > C'$};
\end{tikzpicture}
\caption{A three-node network for which no Nash equilibrium exists.}
\label{fig3}
\end{figure}
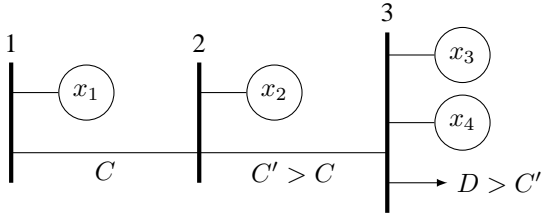

Suppose $b$ is a Nash equilibrium with the associated outcome $(x,\p)$. First, it can be shown that $a_3 \le \p_1 = \p_2 = \p_3$; otherwise, there must be some generator who has an incentive to deviate. Moreover, generator $1$ is able to ensure $x_1 = C$, and thus $x_2 = C'-C$. But this cannot be a Nash equilibrium, since generator $2$ now can increase $x_2$ by making $\p_2$ slightly smaller, so as to be better off. Therefore, there does not exist a Nash equilibrium.
\end{example}

\subsubsection*{Price of Anarchy}

There are also cases in which Nash equilibria exist but some of them are undesirable in terms of efficiency. The price of anarchy is a metric that measures how the efficiency degrades due to the selfish behavior of the players, compared with the socially optimal outcome. It is defined as the ratio between the cost of the worst equilibrium and the socially optimal cost:
\begin{equation*}
\text{PoA} = \frac{\max_x \sum_n c_n(x_n)}{\sum_n c_n(x_n^*)},
\end{equation*}
where $x^*$ is the economic dispatch that solves (\ref{eq:edp1})-(\ref{eq:edp4}), and $x$ is the resulting dispatch associated with any Nash equilibrium. The following example shows that the price of anarchy in the economic dispatch game can be arbitrarily large.

\begin{example}\label{exmp:3}
Consider the network as shown in Fig. \ref{fig4}. The network has two nodes, with generator $1$ and $4$ at node 1, and generator $2$ and $3$ at node $2$. The line capacity is $C$. The only demand is $D = 2C$ at node $1$. The cost functions are $c_1(x_1) = x_1$, $c_2(x_2) = kx_2$, $c_3(x_3) = kx_3$, and $c_4(x_4) = 2kx_4$, where $k > 1$ is a parameter subject to change. So the efficient economic dispatch is $x^* = (2C,0,0,0)$.

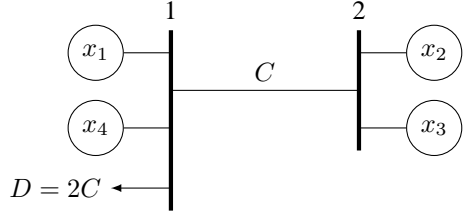
\begin{figure}[h]
\centering
\begin{tikzpicture}[>=latex]
\draw [-,ultra thick] (0,-0.8) -- (0,1.6) node[above] {1};
\draw [-,ultra thick] (2.5,0) -- (2.5,1.6) node[above] {2};
\draw [-] (0,0.8) -- (1.25,0.8) node[above] {$C$} -- (2.5,0.8);
\node at (-1,1.3) [circle,draw] {$x_1$} edge (0,1.3);
\node at (-1,0.3) [circle,draw] {$x_4$} edge (0,0.3);
\draw [->] (0,-0.5) -- (-0.8,-0.5) node[left] {$D = 2C$};
\node at (3.5,1.3) [circle,draw] {$x_2$} edge (2.5,1.3);
\node at (3.5,0.3) [circle,draw] {$x_3$} edge (2.5,0.3);
\end{tikzpicture}
\caption{A two-node network for which the price of anarchy can be arbitrarily large.}
\label{fig4}
\end{figure}

Consider the bid profile, $(p_n,s_n,q_n)$ for each $n$, where $p_1 = p_4 = 2k$, $p_2 = p_3 = k$ and $q_n = p_n$ (so that $s_n$ does not matter), with the associated outcome $x = (C,C,0,0)$ and $\p = (2k,k)$. It is easy to check that it is a Nash equilibrium. For example, generator $1$ has no incentive to make $\p_1 = k$ (so that $x_1 = 2C$) since his new payoff, $(k-1)2C$, would be smaller than his current payoff, $(2k-1)C$. Thus, the price of anarchy is bounded below by (since there may exist other equilibria that are even worse)
\begin{equation*}
\frac{c_1(C) + c_2(C)}{c_1(2C)} = \frac{C+kC}{2C} = \frac{k+1}{2} \to \infty,
\end{equation*}
as $k \to \infty$. Therefore, the price of anarchy can be arbitrarily large.
\end{example}

\section{Efficient NE: Sufficient Conditions}\label{sec:pros}

On the other hand, LMP works well in most cases. While the price of anarchy is unsatisfactory, the price of stability (defined as the ratio between the cost of the best equilibrium and the socially optimal cost) can still be good, which is equal to $1$ when an efficient Nash equilibrium (NE) exists. In this section, we present two sufficient conditions under either of which there exist efficient Nash equilibria.


Unless otherwise specified, we make the following assumption which ensures that no generator has market power to ask for  arbitrarily high prices.

\begin{assumption}\label{asmp:1}
The feasible dispatch set remains non-empty if any one of the generators is excluded.
\end{assumption}

Our first condition is the following. 

\begin{assumption}[Congestion-Free Condition]
No line flow constraint (\ref{eq:edp3}) is binding in the economic dispatch problem (\ref{eq:edp1})-(\ref{eq:edp4}).
\end{assumption}

The following lemma shows the uniformity of LMPs in the economic dispatch problem under this condition.

\begin{lemma}\label{lemma:1}
Under the congestion-free condition, all the LMPs are equal in the economic dispatch problem.
\end{lemma}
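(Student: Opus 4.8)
The plan is to read the conclusion directly off the KKT system stated above, with connectivity of the network supplying the final step. I would not need to re-solve the optimization; everything follows from stationarity together with complementary slackness.

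First I would invoke complementary slackness on the line constraints. The congestion-free condition asserts that no constraint (\ref{eq:edp3}) is binding, i.e.\ $Y_{ij}(\th_i-\th_j) < C_{ij}$ strictly for every line. Since the KKT system includes $\m_{ij}[Y_{ij}(\th_i-\th_j)-C_{ij}] = 0$ with $\m_{ij} \ge 0$, the strictly negative bracket forces $\m_{ij} = 0$ for all $(i,j)$. Substituting this into the $\th_i$-stationarity condition $\sum_j Y_{ij}(\p_i-\p_j+\m_{ij}-\m_{ji}) = 0$ collapses it to the purely price-based balance
\begin{equation*}
\sum_j Y_{ij}(\p_i-\p_j) = 0, \quad \forall i.
\end{equation*}

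The heart of the argument, and the step I expect to require the most care, is upgrading this local balance at each node to global equality of all the $\p_i$. I would argue by a maximum-node principle: let $i^*$ attain $\max_i \p_i$. In the sum $\sum_j Y_{i^*j}(\p_{i^*}-\p_j) = 0$ every summand is nonnegative, because $Y_{i^*j} \ge 0$ and $\p_{i^*} \ge \p_j$ by maximality; a sum of nonnegative terms vanishing forces each term to vanish. Hence $\p_j = \p_{i^*}$ for every neighbor $j$ of $i^*$, i.e.\ every $j$ with $Y_{i^*j} > 0$. Because the network is assumed connected throughout the paper, I can then propagate this equality along a path from $i^*$ to any node and conclude $\p_i = \p_{i^*}$ for all $i$, which is exactly the claim.

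As an alternative to the maximum-node propagation, one can recognize the relation $\sum_j Y_{ij}(\p_i-\p_j)=0$ as $L\p = 0$, where $L$ is the weighted graph Laplacian of the admittance network (diagonal $\sum_j Y_{ij}$, off-diagonals $-Y_{ij}$); connectivity gives $\ker L = \mathrm{span}\{\mathbf{1}\}$, so $\p$ must be a constant vector. Either route reaches the same conclusion, and in both the only nontrivial ingredient beyond complementary slackness is that connectivity of the admittance graph converts neighbor-wise equality into equality everywhere.
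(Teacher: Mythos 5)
Your proposal is correct and follows essentially the same route as the paper's proof: complementary slackness gives $\m_{ij}=0$, the stationarity condition reduces to $\sum_j Y_{ij}(\p_i-\p_j)=0$, and a maximum-node argument propagated by connectivity yields equality of all LMPs (the paper phrases this via the set $\arg\max_i \p_i$ rather than a single maximizing node, which is the same idea). The Laplacian-kernel alternative you mention is a nice compact restatement but is not needed and is not what the paper does.
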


\begin{proof}
Let $\sI$ be the set of nodes with the largest LMPs, i.e., $\arg\max_i \p_i$. Under the congestion-free condition, $\m_{ij} = 0$ for all $(i,j)$. From the KKT conditions, we have
\begin{equation*}
\sum_j Y_{ij} (\p_i - \p_j) = 0, \ \forall i.
\end{equation*}
Then for each $i \in \sI$ and $j$ connected to $i$ (i.e., $Y_{ij} > 0$), since $\p_j \le \p_i$, we must have $\p_j = \p_i$, or $j \in \sI$. It follows by the connectedness of the network that all the nodes belong to $\sI$. Therefore, all the $\p_i$'s are equal.
\end{proof}

It is immediate to prove by construction the existence of efficient Nash equilibria in the economic dispatch game, under the congestion-free condition (which is defined for the economic dispatch problem).

\begin{theorem}
Under Assumption \ref{asmp:1} and the congestion-free condition, there exists an efficient Nash equilibrium in the LMP game.
\end{theorem}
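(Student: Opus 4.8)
The plan is to prove the result constructively: exhibit one bid profile, check that its associated outcome is the efficient dispatch, and then rule out profitable unilateral deviations. Let $x^*$ denote the efficient dispatch solving \eqref{eq:edp1}--\eqref{eq:edp4}, and let $\p^*$ be the common value of all the LMPs guaranteed by Lemma \ref{lemma:1} under the congestion-free condition. I would take as the candidate equilibrium the \emph{truthful} bid obtained from the dimension-reduction described in the bid-format subsection: for each $n$ set $s_n = x_n^*$, and let $p_n$ (resp.\ $q_n$) be the left (resp.\ right) derivative of $c_n$ at $x_n^*$. By the dispatch-preservation property stated there, the outcome associated with this bid is $(x^*,\p^*\mathbf{1})$, i.e., the efficient dispatch together with the uniform price $\p^*$; in particular the dispatch solves the economic dispatch problem, so the bid is efficient \emph{provided} it is a Nash equilibrium.

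To verify the equilibrium condition, fix a generator $n \in \sN_i$ and an arbitrary unilateral deviation, and let $(x',\p')$ be the resulting outcome. From \eqref{eq:payoff} the deviation payoff is $u_n' = \p_i' x_n' - c_n(x_n')$, and I would bound it by two separate inequalities. The first is a \emph{best-response} bound: the KKT conditions of \eqref{eq:edp1}--\eqref{eq:edp4} give $c_n'(x_n^*) = \p^*$ whenever $x_n^* > 0$ and $c_n'(0) \ge \p^*$ otherwise, so concavity of $x \mapsto \p^* x - c_n(x)$ yields
\begin{equation*}
\p^* x - c_n(x) \le \p^* x_n^* - c_n(x_n^*) = u_n, \qquad \forall x \ge 0 .
\end{equation*}
The second is a \emph{price-cap} bound: $\p_i' \le \p^*$ whenever $x_n' > 0$. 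Granting this, $u_n' = \p_i' x_n' - c_n(x_n') \le \p^* x_n' - c_n(x_n') \le u_n$, while the case $x_n' = 0$ gives $u_n' = 0 \le u_n$ directly; this is exactly the no-deviation condition.

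The main obstacle is the price-cap bound $\p_i' \le \p^*$, and this is precisely where both hypotheses are needed. The economic content is that the remaining generators act as a competitive fringe: the active ones bid marginal price $\p^*$, so meeting an additional unit of demand at node $i$ costs no more than $\p^*$ \emph{as long as that unit can be routed to node $i$ without congestion}. Assumption \ref{asmp:1} guarantees that generator $n$ is not pivotal --- the others can feasibly serve all the demand even with $x_n' = 0$ --- which prevents generator $n$ from manufacturing a ``must-run'' situation of the kind exploited in Examples \ref{exmp:1} and \ref{exmp:2}; meanwhile the congestion-free condition guarantees strict slack on every line at $x^*$, so that the uniform price $\p^*$ is sustained and small reroutings induced by the deviation keep \eqref{eq:edp3} non-binding.

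The delicate part is that a large withholding deviation could in principle reroute enough power to make a line bind and thereby push $\p_i'$ above $\p^*$; ruling this out rigorously requires arguing, via the KKT conditions for the deviation program together with Assumption \ref{asmp:1} and the strict slack from the congestion-free condition, that any such rerouting either remains feasible at price $\p^*$ or forces $x_n'$ down to a level at which $\p_i' x_n' - c_n(x_n') \le u_n$ still holds. A secondary technicality is the non-uniqueness of $(x',\p')$ flagged after the definition of the game: I would have to confirm that the lexicographic tie-breaking for $\p$ cannot select a dual solution with $\p_i' > \p^*$ at a dispatched node, which again reduces to the competitive-fringe argument above.
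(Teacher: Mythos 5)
There is a genuine gap, and it lies in your construction itself, not merely in the step you flag as delicate at the end. Your candidate profile is the \emph{truthful reduced} bid: since the $c_n$ are differentiable, your prescription gives $p_n = q_n = c_n'(x_n^*)$ (so $s_n$ is irrelevant and each bid is linear). The paper's construction is different: \emph{every} generator, dispatched or not, bids flat at the common price $\p_1^*$ from Lemma \ref{lemma:1}. The difference is fatal. In your profile, a generator with $x_n^* = 0$ bids $c_n'(0)$, which may be strictly larger than $\p_1^*$, so the residual supply facing a deviator is capped not at $\p_1^*$ but at the fringe's higher marginal cost. Concretely: two nodes joined by a line of capacity $1000$; demand $D = 10$ at node $2$; generator $1$ at node $1$ with $c_1(x_1) = x_1$; generator $2$ at node $2$ with $c_2(x_2) = 2x_2$. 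Then $x^* = (10,0)$ and $\p_1^* = 1$, Assumption \ref{asmp:1} holds, and the dispatch is congestion-free. Your profile has generator $1$ bid slope $1$ and generator $2$ bid slope $2$, giving generator $1$ a payoff of $0$. If generator $1$ deviates to a flat bid of slope $1.9$, he still underbids generator $2$, remains fully dispatched, the uniform price becomes $1.9$, and his payoff is $9 > 0$. So your profile is not a Nash equilibrium, and your price-cap bound $\tp_i \le \p_1^*$ fails even with no congestion and no pivotal generator --- the true obstacle is not the withholding/rerouting scenario you describe, but simply that your fringe does not bid at $\p_1^*$.

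The paper's proof sidesteps exactly this: because all other generators offer unlimited supply at price $\p_1^*$, and Assumption \ref{asmp:1} guarantees feasibility without the deviator, the post-deviation price at the deviator's node cannot exceed $\p_1^*$; the remainder of your argument (the concavity bound $\p_1^* x - c_n(x) \le \p_1^* x_n^* - c_n(x_n^*)$, obtained from the KKT conditions of (\ref{eq:edp1})--(\ref{eq:edp4})) then goes through verbatim. So the fix is to replace your candidate with the paper's: $p_n = q_n = \p_1^*$ for all $n$. Note also that even if your construction had been correct, the write-up would still be incomplete as a proof, since you explicitly leave the price-cap step as something that ``would require arguing'' rather than establishing it.
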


\begin{proof}
Let $x^*$ be an economic dispatch that solves (\ref{eq:edp1})-(\ref{eq:edp4}), with the associated LMP $\p^*$, where $\p_i^* = \p_1^*$ for all $i$ (by Lemma \ref{lemma:1}). Consider the bid profile, $(p_n,s_n,q_n)$ for each $n$, where $p_n = q_n = \p_1^*$ for $n \in \sN_i$ (so that $s_n$ does not matter), with the associated outcome $x = x^*$ and $\p = \p^*$. It remains to show that it is a Nash equilibrium.

Consider generator $n$. From the KKT conditions, we have $x_n^* \in \arg\max_x \p_1^* x - c_n(x)$. His current payoff is $u_n = \p_1^* x_n^* - c_n(x_n^*)$. Suppose he changes his bid so that the resulting outcome is $(\tx,\tp)$. Under Assumption \ref{asmp:1} and given the others' bids, we have $\tp_i \le \p_1^*$. The payoff of generator $n$ will be
\begin{align*}
\tu_n & = \tp_i\tx_n - c_n(\tx_n)\\
& \le \p_1^* \tx_n - c_n(\tx_n)\\
& \le \p_1^* x_n^* - c_n(x_n^*)\\
& = u_n.
\end{align*}
Thus, he has no incentive to deviate. This proves that the constructed bid profile is a Nash equilibrium.
\end{proof}

Since there is no Nash equilibrium in Example \ref{exmp:1} and \ref{exmp:2}, the congestion-free condition cannot be satisfied in either of them (which is indeed true).


Our second condition is a condition on the network topology, and its satisfaction is easy to determine.

\begin{assumption}[Monopoly-Free Condition]
There are at least two generators at each node.
\end{assumption}

Note that under the monopoly-free condition, Assumption \ref{asmp:1} is automatically satisfied. The monopoly-free condition is more natural than the congestion-free condition, since we only need to know the placement of the generators in the network; we do not even need to know the line capacity limits, nor the cost functions of the generators. The proof of the existence of efficient Nash equilibria under the monopoly-free condition is similar as before.

\begin{theorem}
Under Assumption \ref{asmp:1} and  the monopoly-free condition, there exists an efficient Nash equilibrium in the LMP game.
\end{theorem}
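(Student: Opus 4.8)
The plan is to mimic the construction used for the congestion-free theorem, but now the monopoly-free hypothesis must be used to rule out the upward price manipulation that broke the equilibria in Examples \ref{exmp:1} and \ref{exmp:2}. So first I would let $x^*$ be an economic dispatch solving (\ref{eq:edp1})--(\ref{eq:edp4}) with associated LMP $\p^*$, and I would build a candidate bid profile in which each generator $n\in\sN_i$ bids flat at the local price, i.e. $p_n=q_n=\p_i^*$, giving back the outcome $(x^*,\p^*)$. The efficiency is then automatic by construction, and what remains is exactly the no-deviation check.

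**The key difference** from the previous proof is that here I cannot conclude $\tp_i \le \p_1^*$ merely from Assumption \ref{asmp:1}; I need the sharper fact that, when every rival generator is bidding flat, the deviating generator $n$ cannot push the local price above $\p_i^*$. The plan is to show that because there is a second generator $m\in\sN_i$ also bidding $p_m=q_m=\p_i^*$, the ISO will never accept generation from $n$ at a price strictly above $\p_i^*$: any marginal unit node-$i$ needs can be served by $m$ (or routed through the rest of the network whose flat bids are consistent with the optimal duals) at cost $\p_i^*$, so the KKT/complementarity structure of the perturbed LP forces the post-deviation LMP at $n$'s node to satisfy $\tp_i \le \p_i^*$. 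I would phrase this as: given the others' flat bids, the best response of the mechanism caps $\tp_i$ at $\p_i^*$, hence the competing generator at the same node pins the price from above. Once $\tp_i \le \p_i^*$ is established, the same three-line chain as before closes the argument:
\begin{align*}
\tu_n &= \tp_i \tx_n - c_n(\tx_n)\\
&\le \p_i^* \tx_n - c_n(\tx_n)\\
&\le \p_i^* x_n^* - c_n(x_n^*) = u_n,
\end{align*}
where the last inequality uses $x_n^* \in \arg\max_x \p_i^* x - c_n(x)$ from the KKT conditions.

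**The main obstacle** I anticipate is making the price-cap claim $\tp_i \le \p_i^*$ fully rigorous in the presence of line congestion, which is precisely the regime this theorem is meant to cover (the congestion-free case being already handled). When lines are binding, the LMP at node $i$ is determined by the interaction of the marginal generator with the congestion multipliers $\m_{ij}$ across the whole network, not just by the local competitor. I would need to argue carefully that deviation by $n$ alone, against a backdrop of rivals frozen at the optimal flat prices, cannot manufacture a new binding-congestion pattern that raises $\tp_i$ above $\p_i^*$; the presence of the second local generator bidding at $\p_i^*$ is the lever that prevents this, since it supplies an elastic source of cheap power at node $i$ itself, short-circuiting any attempt to exploit transmission scarcity. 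Handling the tie-breaking rule for nonunique duals (the lexicographic convention fixed earlier) may also require a short remark to ensure $\tp_i$ is well defined after the deviation. Modulo that congestion bookkeeping, the structure of the proof is a direct parallel to the congestion-free theorem.
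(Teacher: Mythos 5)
Your proposal follows the paper's proof essentially verbatim: the same flat-bid construction $p_n = q_n = \p_i^*$ at the optimal LMPs, the same key claim $\tp_i \le \p_i^*$ for a unilateral deviator, and the same closing three-line chain of inequalities using $x_n^* \in \arg\max_x \p_i^* x - c_n(x)$. The congestion ``obstacle'' you anticipate is in fact immediate and needs no bookkeeping: the rival generator $m \in \sN_i$ retains a linear, quantity-unlimited bid of constant slope $\p_i^*$, so dual feasibility in the KKT conditions ($b_m'(\tx_m) - \tp_i \ge 0$) caps the post-deviation price at node $i$ by $\p_i^*$ regardless of which lines become congested --- which is exactly the step the paper itself asserts without further elaboration.
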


\begin{proof}
Let $x^*$ be an economic dispatch that solves (\ref{eq:edp1})-(\ref{eq:edp4}), with the associated LMP $\p^*$. Consider the bid profile, $(p_n,s_n,q_n)$ for each $n$, where $p_n = q_n = \p_i^*$ for $n \in \sN_i$ (so that $s_n$ does not matter), with the associated outcome $x = x^*$ and $\p = \p^*$. It remains to show that it is a Nash equilibrium.

Consider generator $n \in \sN_i$. From the KKT conditions, we have $x_n^* \in \arg\max_x \p_i^* x - c_n(x)$. His current payoff is $u_n = \p_i^* x_n^* - c_n(x_n^*)$. Suppose he changes his bid so that the resulting outcome is $(\tx,\tp)$. Under the monopoly-free condition and given the others' bids, we have $\tp_i \le \p_i^*$. The payoff of generator $n$ will be
\begin{align*}
\tu_n & = \tp_i\tx_n - c_n(\tx_n)\\
& \le \p_i^* \tx_n - c_n(\tx_n)\\
& \le \p_i^* x_n^* - c_n(x_n^*)\\
& = u_n.
\end{align*}
Thus, he has no incentive to deviate. This proves that the constructed bid profile is a Nash equilibrium.
\end{proof}

In Example \ref{exmp:3}, $p_n = q_n = 1$ (and $s_n$ arbitrary) for all $n$, with the associated outcome $x = (2C,0,0,0)$ and $\p = (1,1)$, is such an efficient Nash equilibrium.

Although neither of the two conditions is necessary for a Nash equilibrium to exist (which can be easily shown), they are mild enough to cover most practical scenarios. In that sense, we justify the fact that the LMP mechanism is widely adopted and works well most of the time. In other words, we can say that the best equilibrium in the economic dispatch game is socially optimal except in rare cases.

\section{The Power Network Second Price Mechanism for Electricity Markets}\label{sec:pnsp}

Since the LMP mechanism does not always induce the desired outcome, we seek alternative mechanisms. In this section, we propose the Power Network Second Price (PNSP) market mechanism that always induces an efficient Nash equilibrium.

The proposed mechanism is a VCG-type mechanism. Note that the standard VCG mechanism does not apply directly, because we require a finite-dimensional (and preferably low-dimensional) bid space while the type space is infinite-dimensional.

In the PNSP mechanism, we still adopt the three-dimensional bid $(p_n,s_n,q_n)$ for generator $n$. The dispatch rule remains the same, given by the optimization problem (\ref{eq:edg1})-(\ref{eq:edg4}). The key difference is the payment rule (or pricing rule), as specified below.

Let $x^{-n_0} = (x_1^{-n_0},\ldots,x_N^{-n_0})$ denote the solution when generator $n_0$ is excluded (so that $x_{n_0}^{-n_0} = 0$). Note that we need Assumption \ref{asmp:1} to ensure that the definition is meaningful. The payment made to generator $n_0$ is given by
\begin{equation}\label{eq:payment}
w_{n_0} = \sum_{n \ne n_0} b_n(x_n^{-n_0}) - \sum_{n \ne n_0} b_n(x_n),
\end{equation}
which is the positive externality that generator $n_0$ imposes on the other players by his participation. Then, his payoff is
\begin{equation}\label{eq:payoff-pnsp}
u_{n_0} = w_{n_0} - c_{n_0}(x_{n_0}).
\end{equation}
This completes the specification of the PNSP mechanism.

We show that the PNSP mechanism always induces an efficient Nash equilibrium.

\begin{theorem}
Under Assumption \ref{asmp:1}, there exists an efficient Nash equilibrium in the PNSP mechanism specified by (\ref{eq:edg1})-(\ref{eq:payoff-pnsp}).
\end{theorem}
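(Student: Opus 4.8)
The plan is to recognize the PNSP mechanism as a VCG mechanism and to exhibit an efficient Nash equilibrium by having each generator bid ``truthfully,'' i.e.\ report the three-dimensional bid that reproduces the subgradient of its true cost at the efficient dispatch. Concretely, let $x^*$ solve the economic dispatch problem (\ref{eq:edp1})-(\ref{eq:edp4}), with associated LMP $\p^*$. For each generator $n \in \sN_i$ I would take the bid $(p_n,s_n,q_n)$ with $s_n = x_n^*$ and $p_n,q_n$ equal to the left and right derivatives of $c_n$ at $x_n^*$ (so $p_n \le q_n$ by convexity, and $[p_n,q_n] = \partial c_n(x_n^*)$). By the dimensional-reduction argument of Section~\ref{sec:game} (the bid format discussion), replacing each $c_n$ by this matched bid leaves the dispatch unchanged, so the induced dispatch is exactly $x^*$; the constructed profile is therefore efficient, and it only remains to verify that it is a Nash equilibrium.

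The core step is the standard VCG payoff decomposition. Fix a generator $n_0$ and hold the others' bids at the constructed values. Substituting the payment (\ref{eq:payment}) into the payoff (\ref{eq:payoff-pnsp}) gives
\[
u_{n_0} = \sum_{n \ne n_0} b_n(x_n^{-n_0}) \;-\; \Bigl[\,\sum_{n \ne n_0} b_n(x_n) + c_{n_0}(x_{n_0})\,\Bigr],
\]
where $x$ is the dispatch induced by $n_0$'s (possibly deviating) bid and $x^{-n_0}$ is the dispatch with $n_0$ excluded. Assumption~\ref{asmp:1} guarantees $x^{-n_0}$ exists, and crucially the first term does not depend on $n_0$'s bid. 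Hence maximizing $u_{n_0}$ is equivalent to choosing a bid whose induced dispatch $x$ minimizes the ``true social cost'' $H(x) := \sum_{n \ne n_0} b_n(x_n) + c_{n_0}(x_{n_0})$.

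I would then show that $x^*$ is a global minimizer of $H$ over the feasible set. Since $H$ is convex and the feasible set is cut out by the same linear constraints as (\ref{eq:edp1})-(\ref{eq:edp4}), it suffices to verify the KKT conditions at $x^*$. By construction $\partial b_n(x_n^*) = \partial c_n(x_n^*)$ for every $n \ne n_0$, so the subgradients of $H$ at $x^*$ coincide componentwise with those of $\sum_n c_n$ at $x^*$; as $x^*$ together with $(\p^*,\m^*)$ satisfies the KKT system of the economic dispatch problem, the same multipliers certify that $x^*$ minimizes $H$. Consequently $u_{n_0} \le \sum_{n\ne n_0} b_n(x_n^{-n_0}) - H(x^*)$ for \emph{any} deviation, while the constructed profile, whose induced dispatch is $x^*$, attains this upper bound. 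Thus no unilateral deviation is profitable and the profile is an efficient Nash equilibrium.

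The main obstacle is handling the restricted, finite-dimensional bid space together with the possible non-uniqueness of the dispatch. The encouraging observation is that the bid restriction is benign: it can only shrink the set of dispatches a deviator can induce, which cannot raise his payoff above the unconstrained bound $\sum_{n\ne n_0} b_n(x_n^{-n_0}) - H(x^*)$, so the VCG incentive argument survives verbatim. Genuine care is needed in two places: (i) invoking the bid format argument to confirm that the matched bids reproduce $x^*$ and that the subgradient matching $\partial b_n(x_n^*)=\partial c_n(x_n^*)$ still yields the KKT certificate at the efficient point even when $x_n^* = 0$, i.e.\ on the boundary of the nonnegativity constraint; and (ii) the tie-breaking when the dispatch minimizing $\sum_n b_n$ is non-unique, where the inequality $H(x) \ge H(x^*)$ holds for every feasible $x$ regardless of which optimal dispatch the mechanism selects, so the argument is insensitive to the selection rule, though this should be stated explicitly.
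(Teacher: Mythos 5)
Your core incentive argument is correct and is in substance the same as the paper's: the paper's chain (\ref{eq:pnsp1})--(\ref{eq:pnsp4}) is exactly an unpacked proof of your inequality $H(\tx) \ge H(x^*)$, using the independence of $x^{-n_0}$ from $n_0$'s bid, the bound $b_n(\tx_n) \ge p_n\tx_n$ with $b_n(x_n^*) = p_nx_n^*$, the first-order optimality condition $\nabla c(x^*)\cdot(\tx - x^*) \ge 0$, and convexity of $c_{n_0}$. Your VCG-decomposition packaging (maximizing $u_{n_0}$ $\equiv$ minimizing $H$, then a KKT certificate for $H$ at $x^*$) is a cleaner way to organize the same computation, not a different argument.

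There is, however, one concrete flaw in your construction that the paper deliberately avoids. The paper sets $q_n > p_n = c_n'(x_n^*)$ \emph{strictly}, so each bid has a kink at $s_n = x_n^*$; you set $q_n$ equal to the right derivative of $c_n$ at $x_n^*$, which under the paper's smoothness assumptions ($c_n'$ and $c_n''$ exist) equals $p_n$, so all your bids are linear. With linear bids the dispatch problem (\ref{eq:edg1})--(\ref{eq:edg4}) typically has many optima, some inefficient: with two generators at one node, $c_1(x)=c_2(x)=x^2$ and $D=2$, your bids are $b_1(x)=b_2(x)=2x$, under which \emph{every} split of the load is bid-optimal, including $(2,0)$ with true cost $4 > 2$. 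The paper's tie-breaking rule covers only $\p$, not $x$, so nothing forces the mechanism to select $x^*$. If it selects $\hat x = (2,0)$, the constructed profile is not efficient, and is not even a Nash equilibrium: the deviator's payoff is $\mathrm{const} - H(\hat x) < \mathrm{const} - H(x^*)$, and a unilateral deviation to a kinked bid that forces the dispatch back to $(1,1)$ is strictly profitable (in the example, generator $2$'s payoff rises from $0$ to $1$). So your remark that the argument is ``insensitive to the selection rule'' is true only for bounding deviation payoffs; it fails for the claim that the constructed profile itself attains the bound. The repair is exactly the paper's choice $q_n > p_n$, whose kink at $x_n^*$ pins down $x^*$ as the bid-optimal dispatch (alternatively, one must lean explicitly on the paper's convention that an equilibrium is a bid profile \emph{together with} an associated outcome, and associate $x^*$ with your profile). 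With that modification, your proof is complete.
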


\begin{proof}
Let $x^*$ be an economic dispatch that solves (\ref{eq:edp1})-(\ref{eq:edp4}). Consider the bid profile, $(p_n,s_n,q_n)$ for each $n$, where $p_n = c_n'(x_n^*)$, $s_n = x_n^*$ and $q_n > p_n$. It is easy to check that $x^*$ also solves (\ref{eq:edg1})-(\ref{eq:edg4}) under this bid profile. It remains to show that it is a Nash equilibrium.

Consider generator $n_0$. His current payoff is
\begin{equation*}
u_{n_0} = \sum_{n \ne n_0} b_n(x_n^{-n_0}) - \sum_{n \ne n_0} b_n(x_n^*) - c_{n_0}(x_{n_0}^*).
\end{equation*}
Suppose he changes his bid, resulting in a new dispatch $\tx = (\tx_1,\ldots,\tx_N)$, with $\tx^{-n_0} = (\tx_1^{-n_0},\ldots,\tx_N^{-n_0})$ defined similarly as before. Then his payoff will be
\begin{equation*}
\tu_{n_0} = \sum_{n \ne n_0} b_n(\tx_n^{-n_0}) - \sum_{n \ne n_0} b_n(\tx_n) - c_{n_0}(\tx_{n_0}).
\end{equation*}
So his payoff changes by
\begin{align}
& \tu_{n_0} - u_{n_0} \NN\\
= \ & \sum_{n \ne n_0} b_n(x_n^*) - \sum_{n \ne n_0} b_n(\tx_n) + c_{n_0}(x_{n_0}^*) - c_{n_0}(\tx_{n_0}) \label{eq:pnsp1}\\
\le \ & \sum_{n \ne n_0} p_n(x_n^*-\tx_n) + c_{n_0}(x_{n_0}^*) - c_{n_0}(\tx_{n_0}) \label{eq:pnsp2}\\
= \ & \sum_{n \ne n_0} c_n'(x_n^*)(x_n^*-\tx_n) + c_{n_0}(x_{n_0}^*) - c_{n_0}(\tx_{n_0}) \NN\\
\le \ & c_{n_0}'(x_{n_0}^*)(\tx_{n_0}-x_{n_0}^*) + c_{n_0}(x_{n_0}^*) - c_{n_0}(\tx_{n_0}) \label{eq:pnsp3}\\
\le \ & 0. \label{eq:pnsp4}
\end{align}
Equation (\ref{eq:pnsp1}) follows since $x^{-n_0} = \tx^{-n_0}$. Equation (\ref{eq:pnsp2}) follows from the fact that $b_n(x_n^*) = p_nx_n^*$ and $b_n(\tx_n) \ge p_n\tx_n$. Since $x^*$ minimizes the convex objective function $c(x) = \sum_n c_n(x_n)$ over a convex set $P$ (determined by (\ref{eq:edp2})-(\ref{eq:edp4})), we have
\begin{equation*}
\nabla c(x^*) \cdot (x - x^*) \ge 0, \quad \forall x \in P.
\end{equation*}
In particular, letting $x = \tx$, we get
\begin{equation*}
\sum_n c_n'(x_n^*)(\tx_n-x_n^*) \ge 0,
\end{equation*}
from which (\ref{eq:pnsp3}) follows. Equation (\ref{eq:pnsp4}) follows from the property of convexity.

Thus, he has no incentive to deviate. This proves that the constructed bid profile is a Nash equilibrium.
\end{proof}

Like the LMP mechanism, there may also be undesirable Nash equilibria in the PNSP mechanism.

\section{Extensions on the Demand Side}\label{sec:ext}

In the modeling of the demand side, we need to consider two factors:
\begin{enumerate}
\item
Elastic or inelastic demand. The demand is inelastic if it is a constant; it is elastic when modeled as a valuation function.
\item
Strategic or non-strategic demand. The demand is non-strategic if the reserve price (for inelastic demand) or the valuation function (for elastic demand) is known; otherwise, it is strategic, in which case the consumers report (not necessarily truthfully) such private information as in a double-sided auction.
\end{enumerate}

Thus, there are four combinations of modeling the demand side. We have assumed so far that the demand at each node is inelastic and non-strategic. Due to space constraints, we illustrate the modeling for elastic and strategic demand.

Based on the single-sided economic dispatch problem (\ref{eq:edp1})-(\ref{eq:edp4}), we now assume there are $M$ consumers, indexed by $m = 1,\ldots,M$. Denote the set of consumers at node $i$ by $\sM_i$. The valuation of each consumer $m$ is $v_m: \bbR_+ \to \bbR_+$ as a function of its consumption $y_m$, where $v_m(0) = 0$, $v_m'(x_m) \ge 0$, and $v_m''(x_m) \le 0$. We obtain the double-sided economic dispatch problem in the following:
\begin{align*}
& \underset{x_n,y_m,\th_i}{\max} & & \sum_m v_m(y_m) - \sum_n c_n(x_n)\\
& \text{s.t.} & & \sum_{n \in \sN_i} x_n - \sum_{m \in \sM_i} y_m = \sum_j Y_{ij} (\th_i-\th_j), \ \forall i,\\
& & & Y_{ij} (\th_i - \th_j) \le C_{ij}, \ \forall (i,j),\\
& & & x_n, y_m \ge 0, \ \forall n, \forall m.
\end{align*}
The LMP and the payoff are defined similarly as before.

We can also define the double-sided economic dispatch game and ask each consumer to report a three-dimensional bid as an approximation of his true valuation function. We note that all the main results in this paper have their counterparts for the double-sided setting. The reader can refer to \cite{TaJa13} for our previous work, in which elastic and strategic demand is considered (with a slightly different bid format).

\section{Conclusion}\label{sec:conclu}

We provided a  framework for studying strategic interactions in economic dispatch via the LMP mechanism. We showed that contrary to folklore, a Nash equilibrium may not exist in the LMP market mechanism. And even when a NE exists, the price of anarchy may be arbitrarily large. What these results mean in practice is a very important question worthy of further investigation since wholesale electricity markets where the LMP mechanism is used seemingly work well. But the spot markets apparently do not. Perhaps the lack of an equilibrium in the wholesale market introduces instabilities in the spot market that wouldn't be there otherwise---we can only speculate! 

We have also shown that under two sufficient conditions (no congestion in the economic dispatch problem, or when there are at least two players at each node), an efficient Nash equilibrium does exist. Our findings coincide with the policy proposed in \cite{Sw02}: ensure enough competition in wholesale markets.

We also proposed a new market mechanism that always induces an efficient Nash equilibrium. In the double-sided setting, the mechanism can incur a budget deficit. Thus, further work is needed to tackle these difficult issues.


\bibliographystyle{IEEEtran}
\bibliography{biblio-smartgrid}

\end{document}